\newtheorem{corollary}{Corollary}
\newtheorem{theorem}{Theorem}
\newtheorem{definition}{Definition}
\newtheorem{lemma}{Lemma}
\newtheorem{remark}{Remark}
\newcolumntype{K}{>{\centering\arraybackslash}m{3.5cm}}
\newcolumntype{L}{>{\centering\arraybackslash}m{2.5cm}}
\newcolumntype{M}{>{\centering\arraybackslash}m{1cm}}
\definecolor{Gray}{gray}{.9}
\title{Correcting Two Deletions and Insertions in Racetrack Memory}
\author{Alireza~Vahid,~Georgios~Mappouras,~Daniel~J.~Sorin,~Robert~Calderbank\\
				Department~of~Electrical~and~Computer~Engineering\\
				Duke~University
}
\begin{document}
\sloppy
\maketitle



\begin{abstract}
Racetrack memory is a non-volatile memory engineered to provide both high density and low latency, that is subject to synchronization or shift errors. This paper describes a fast coding solution, in which ``delimiter bits'' assist in identifying the type of shift error, and easily implementable graph-based codes are used to correct the error, once identified. A code that is able to detect and correct double shift errors is described in detail. 
\end{abstract}

\begin{IEEEkeywords}
Racetrack memory, synchronization error, shift error,  insertion/deletion channel, shift error correcting codes.
\end{IEEEkeywords}

\section{Introduction}
\label{Section:Introduction}

Historically, improvements in memory and storage have been due to advances in the memory technologies themselves together with innovations by computer architects who design memory and storage systems, and by coding theorists who design codes for storing data.

Racetrack memory is a non-volatile memory engineered to provide both high density and low latency, and aims to replace conventional memories such as DRAM and Flash. Racetrack memory stores data in ``tape-like'' tracks, and to read a stored bit, an electric current is injected to place the desired bit under the read-write port. Tracks can be manufactured in two or three dimensions, but the U-shaped three-dimensional geometry shown in Fig.~\ref{Fig:Racetrack3D} is preferred because it significantly increases recording density. Prior work on error resilience assumed multiple read/write ports on a single track, but this geometry requires a single read/write port. Competing technologies, such as phase-change memory (PCM) and magneto-resistive random-access memory (MRAM) cannot match the density of racetrack memory. In terms of latency, only SRAM (which is a volatile technology) has a slight advantage, but it comes at the cost of lower density~\cite{mittalsurvey,thomas2011racetrack,sun2013cross}. 

Racetrack uses current injection to {\it shift} magnetic domains that store data bits on the tracks. In this technology, a deletion occurs when the injected current is larger than expected causing one or more domains to be skipped. A repetition occurs when the injected current is smaller than expected, so that the domain under the port does not change, and we read the same bit two or more times. We refer to deletion and repetition errors as shift errors. These types of synchronization error define a channel that has been studied extensively by information theorists~\cite{gallager1961sequential,dobrushin1967shannon,ullman1967capabilities,diggavi2001transmission,kavcic2004insertion,diggavi2007capacity,mitzenmacher2009survey}. There is no closed form expression for the capacity of this insertion/repetition and deletion channel, but standard linear codes are known to be far from optimal. For example, the rate penalty for correcting a single shift error is logarithmic in the block length, but the optimal linear code has rate $1/2$~\cite{abdel2010correcting}.

There is some prior work in the computer architecture literature that seeks to handle shift errors by introducing extra read/write ports. The HiFi scheme presented in~\cite{HiFi} requires at least two extra reads and one extra write to access a stored bit. When an error is detected, it is corrected by reversing the current injection and reading the memory again. The extra read/write operations and the need for additional hardware limit the attractiveness of this solution and diminish the appeal of racetrack memory.

We propose an alternative solution, based on codes that detect and correct shift errors. The codes are easy to implement, and the circuits for encoding and decoding consume little power. Our coding solution does not require additional hardware, and it does not introduce additional reads or writes to memory.

Section~\ref{Section:Practical} describes a product code that can detect and correct up to two shift errors every $m+3$ shift operations in racetrack memory. The product structure makes it possible to decouple error detection from error correction, so that in the common case of no shift error, the access to stored data is fast. We detect up to two shift errors by combining Varshamov-Tenengolts (VT) codes~\cite{VTCodes} with blocks of delimiter bits. Our product construction extends the utility of VT codes, which are limited to correcting single shift errors, and we also provide a low complexity construction for these codes. The inner code detects up to two shift errors within a single track, and is able to correct one. The outer code connects data stored on different tracks, and makes it possible to correct two shift errors.

\section{Racetrack Memory}
\label{Section:Racetrack}

In this section we introduce racetrack memory and our error model.

\subsection{Racetrack Background}

Racetrack memory stores data in $r$ parallel tape-like tracks. Each data bit is stored in a magnetic domain and neighboring domains are separated by a domain wall as depicted in Fig.~\ref{Fig:Racetrack}. All read/write ports and the physical substrate are fixed in position, and as current is passed through the track, the domains pass by magnetic read/write ports positioned near the wire. 
\begin{figure}[t]
\centering
\includegraphics[height = 3.5cm]{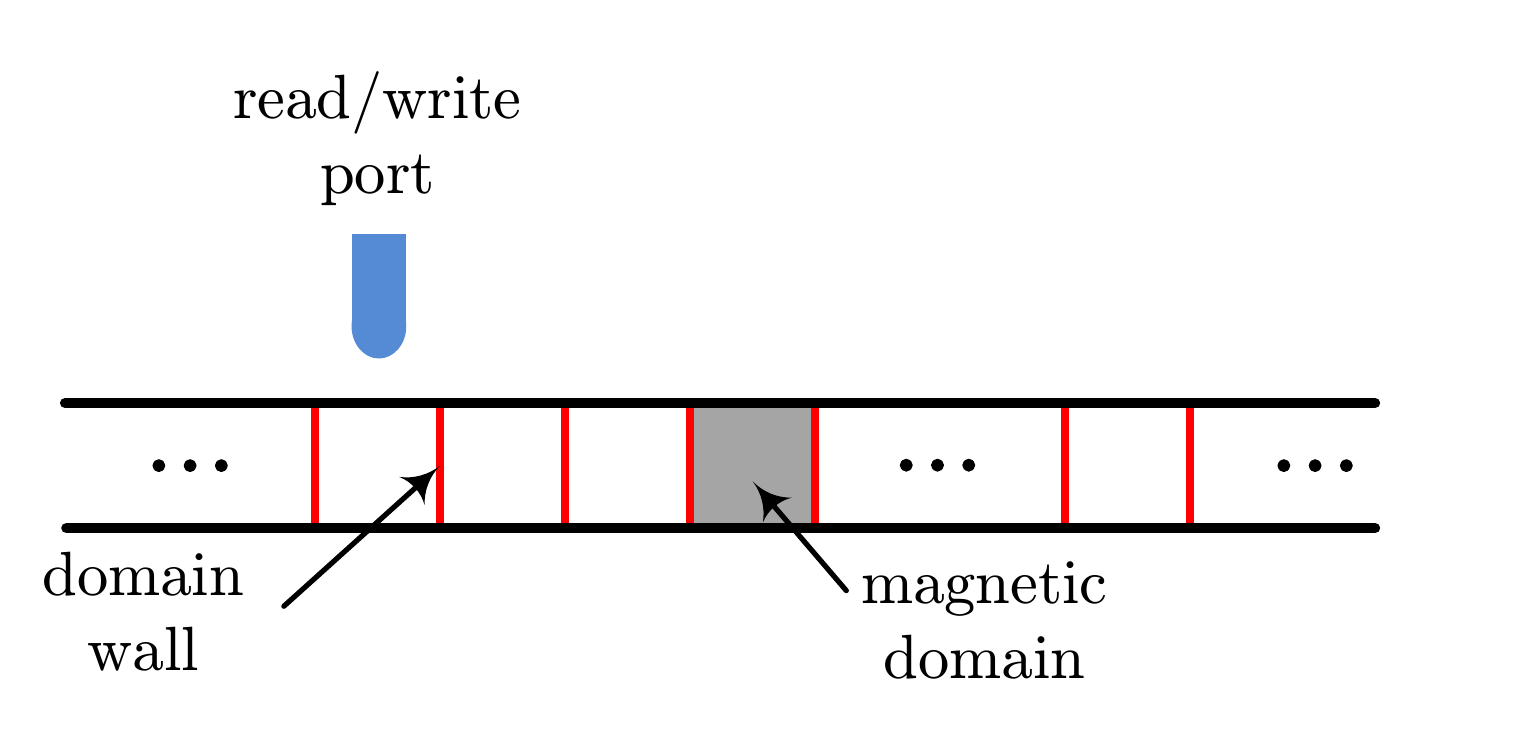}
\caption{A single track with one read/write port. One bit is stored per domain and two neighboring domains are separated by a domain wall.\label{Fig:Racetrack}}
\end{figure}

Racetrack memory suffers from deletion and repetition errors. Bits are stored on a track and are accessed by injecting a current to shift them and place them under the read/write port. A \emph{shift} is therefore the injection of the current in order to place the next bit under the read/write port. If the injected current is larger than expected, then we might skip one or multiple magnetic domains resulting in {\it deletion errors}. On the other hand, if the injected current is smaller than expected, the port's position might not change and we might read the same bit twice. We refer to this error as a {\it repetition error}. We refer to deletion and repetition errors as ``shift errors.'' 

In deletion channels as studied in information theory, the size of the output is equal to the size of the input minus the number of deletions. However, in racetrack memory, the memory controller is always going to provide the desired number of bits, say $n$, regardless of whether an error occurs, so that an error determines which $n$ bits are provided. This is an important characteristic of racetrack memory, and a consequence is that we first need to determine whether an error has occurred.

There is a difference between a {\it repetition} and an {\it insertion}. An insertion means introducing a new bit in a string regardless of the value of its preceding bits. However, a repetition requires the new bit to be equal to the bit right before it. Thus, a repetition error is an instance of an insertion error.

Our goal is to devise a practical code that would allow reliable data recovery in the presence of shift errors in racetrack memory, and for that we need to define the error model.

\subsection{Error Model}
\label{Section:ErrorModel}

We make the following assumptions:
\begin{enumerate}

\item In each track and in every $m+3$ shifts (shift is defined above), at most {\bf two} shift errors may occur. 

\item When reading $r$ parallel tracks, in every $m+3$ shifts (per track), at most one track may have two shift errors.

\end{enumerate}

As we see later, if the second assumption is violated, then we can detect the errors but we will only be able to correctly decode the data on the tracks that have a single shift error. A more restrictive model for deletion and insertion channels with segmented errors was used in~\cite{liu2010codes} in which only one shift error could happen per segment (a segment is a fixed number of input bits). Moreover in~\cite{liu2010codes}, authors assume either deletion errors or insertion errors but not a mixture of the two may occur in the channel; we impose no such constraint.

\section{Main Result}
\label{Section:Main}

In this section, we present our main contribution. First, we need the following definition.

\begin{definition}
If any $k$ input data bits can be mapped to some $m$-bit codeword such that the $k$ data bits can be recovered error-free under the assumptions of Section~\ref{Section:ErrorModel}, we say a zero-error rate of $k/m$ is achievable.
\end{definition}

\begin{theorem}
\label{THM:Main}
For racetrack memory with the error model defined in Section~\ref{Section:Racetrack} and with $r$ parallel tracks, we can achieve a zero-error rate of
\begin{align}
R_{0}\left( \ell, r \right) = \frac{r-1}{r} \times \frac{2^\ell - \ell - 1}{2^\ell+6}, \qquad \ell \in \mathbb{Z}^+,
\end{align}
where $k = 2^\ell - \ell - 1$ is the number of input data bits that are mapped to $m = 2^\ell+6$ magnetic domains on the memory.
\end{theorem}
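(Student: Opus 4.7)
The plan is to give an explicit product construction whose rate matches the claimed value. The outer code is a single-parity-check across the $r$ parallel tracks: one designated track stores the bitwise XOR of the payloads of the other $r-1$ tracks, contributing the factor $(r-1)/r$. The inner code, applied independently on each track, nests a Varshamov--Tenengolts (VT) codeword of length $n = 2^\ell - 1$ inside a fixed pattern of $m - n = 7$ delimiter bits, yielding total length $m = 2^\ell+6$. Because the largest VT class satisfies $|VT_0(n)|\ge 2^n/(n+1) = 2^{\,2^\ell-\ell-1}$, the VT layer carries $k = 2^\ell-\ell-1$ information bits and corrects a single shift error; multiplying the two rates reproduces $R_0(\ell,r)$.

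Given this construction, I would first argue that the inner code detects $0$, $1$, or $2$ shift errors and corrects one of them. Reading a track produces a length-$m$ window whose delimiter positions are examined. Intact delimiters together with a valid VT syndrome signal that no error occurred. A single disturbed delimiter block, or an invalid VT syndrome with intact delimiters, signals exactly one shift error whose location is isolated to either the delimiter region or the VT region, after which the standard VT decoder corrects it. Two shift errors corrupt the delimiter layout enough that the code flags the track as uncorrectable without misclassifying it as a single-error event. By the second assumption of Section~\ref{Section:ErrorModel}, at most one track is flagged in this way within any $m+3$-shift window; XORing the successfully decoded payloads of the other $r-1$ tracks against the parity track recovers the flagged payload, completing the decoding and achieving the stated rate.

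The main obstacle, and the real technical content of the construction, lies in the design and verification of the delimiter pattern. Shift errors that straddle a delimiter boundary can mimic one another: a deletion of a delimiter bit followed by a repetition inside the VT codeword could, superficially, leave a read-out resembling a single shift, and distinct pairs of shifts can produce identical syndromes if the delimiter runs are too short. The delimiter lengths and internal boundaries therefore have to be chosen so that any pair of shift errors occurring within an $m+3$-shift window produces a visibly different pattern from every single shift, and so that the residual syndrome on a singly-erred track always points to a unique VT correction. Matching the $+3$ slack in the error model to the $+6$ delimiter budget through this case analysis, rather than the coding-rate calculation itself, is where I expect the bulk of the work to lie.
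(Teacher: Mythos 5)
Your architecture matches the paper's: a VT inner code framed by fixed delimiter bits on each track, and a single-parity outer code across tracks contributing the factor $(r-1)/r$. (The paper takes $n = 2^\ell$ with six delimiter bits $\mathbf{110000}$ and an explicit systematic encoder placing check bits at the power-of-two positions, rather than your $n = 2^\ell - 1$ with seven delimiters and the pigeonhole bound $|VT_0(n)| \geq 2^n/(n+1)$; both yield the same $k$ and $m$, so that difference is cosmetic.) However, there is a genuine error in your classification rule, precisely on the one case that is delicate: you assert that an invalid VT syndrome with intact delimiters ``signals exactly one shift error \ldots\ after which the standard VT decoder corrects it.'' A single shift error always displaces the trailing delimiter pattern by one position (left for a deletion, right for a repetition), and a single error confined to the delimiter region leaves the VT checksum valid; hence intact delimiters together with an invalid checksum can only arise from a deletion \emph{and} a repetition occurring together, with net displacement zero. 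That is a two-error event which a VT code cannot correct --- the paper's Lemma~\ref{Lemma:combination} shows VT codes can only \emph{detect} it (the read-out is either the original codeword or an invalid one) --- so it must be declared an erasure and repaired by the outer parity code (the shaded row of Table~\ref{Table:Syndrom}). Your decoder would instead hand this word to the single-error VT decoder and fail.

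Beyond that, you explicitly defer the exhaustive case analysis showing that every combination of up to two shift errors is either corrected or flagged as an erasure, and never mistaken for a correctable single error (``where I expect the bulk of the work to lie''). That analysis is the substance of the proof: it is carried out in Table~\ref{Table:Syndrom} and Appendix~\ref{Appendix:Table}, covering deletions and repetitions split between the VT part and the delimiter part, and it is what justifies the six-bit delimiter budget. Without it, and with the misclassified deletion-plus-repetition case above, the proposal does not establish the zero-error guarantee claimed by the theorem.
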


Our code has three main components in two dimension. First dimension is within a single track where we use Varshamov-Tenengolts codes to store data. These codes are able to correct single shift errors. However, Varshamov-Tenengolts codes are not capable of detecting or correcting two shift errors. We enhance these codes by appending some helper bits that we refer to as {\it delimiter bits} which allow for double-shift-error detection per track. The role of the delimiter bits is to convert racetrack to a traditional deletion/insertion channel. The second dimension is across the $r$ tracks where we deploy a single parity-check code to correct double shift errors. The challenging part is to detect two shift errors and correct one. We use a simple outer code to correct two shift errors. 



The rest of the paper is dedicated to the proof of Theorem~\ref{THM:Main}. We also note that $R_{0}\left( \ell, r \right) \rightarrow 1$ as $\ell,r \rightarrow \infty$. As an example, consider $\ell = 6$ (\emph{i.e.} $57$ input data bits) and $r = 8$. Then, we have $R_{0}\left( 6, 8 \right) = 0.7125$.

\section{Proof of Theorem~\ref{THM:Main}: Practical Codes for Racetrack Memory}
\label{Section:Practical}

In this section, we introduce our code for racetrack memory and we prove Theorem~\ref{THM:Main}. We start with a single track and we devise a code that can correct a single shift error and detect up to two shift errors every $m+3$ shifts. We then show how to correct two shift errors using an outer code that takes advantage of the spatial diversity in racetrack memory. 

\begin{table*}[t]
\begin{center}
\begin{tabular}{|M|M|M|M|M|M|L|K|}
\hline
\multicolumn{7}{|c|}{Observation} & {\multirow{2}{*}{Decision}} \\
\cline{1-7}

position $m-5$ & position $m-4$ & position $m-3$ & position $m-2$ & position $m-1$ & position $m$ & Checksum {\bf mod~} $n+1$ & \\ \hline
1      &  1    &  0     & 0      &  {\sf X} &  {\sf X}  & =~0                 &  no error                   \\ \hline \rowcolor{Gray}
1      &  1    &  0     & 0      &  {\sf X} &  {\sf X}  & $\neq$~0            &  1 del. and 1 rep.: {\bf erasure}  \\ \hline 
1      &  0    &  0     & 0      &  {\sf X}  &  {\sf X}   & {\sf X}           &  1 deletion error           \\ \hline
{\sf X} &  1    &  1     & 0      &  0     &  {\sf X}   & {\sf X}             &  1 repetition error         \\ \hline 
0      &  0    &  0     & 0      &  {\sf X}     &  {\sf X}   & {\sf X}        &  2 deletion errors: {\bf erasure}         \\ \hline
{\sf X}      &  {\sf X}    &  1     & 1      &  0     &  0   & {\sf X}        &  2 repetition errors: {\bf erasure}       \\ \hline
\end{tabular}
\vspace{2mm}
\caption{Resulting observations in positions $m-2$, $m-1$ and $m$ for all possible error combinations and ${\sf X}$ indicates an irrelevant entry.\label{Table:Syndrom}}
\end{center}
\vspace{-6mm}
\end{table*}




On a single track, we encode the data bits into an {\it extended codeword} of length $m$ which consists of a Varshamov-Tenengolts code (defined below) of length $n = m-6$ followed by six helper bits that we refer to as delimiter bits.

\begin{definition}
The Varshamov-Tenengolts code $VT\left( n \right)$ is the set of all binary strings $\left( c_1, c_2, \ldots, c_n \right)$ satisfying
\begin{align}
\label{eq:VTCondition}
\sum_{i=1}^n{ic_i} \overset{\mathbf{mod}~n+1}\equiv 0, 
\end{align}
where the sum is evaluated as an ordinary rational integer\footnote{Technically speaking, our definition is a specific instance of VT codes. In general, VT codes satisfy $\sum_{i=1}^n{ic_i} \overset{\mathbf{mod}~n+1}\equiv a$, and we used $a=0$.}.
\end{definition}

\begin{remark}
VT codes were introduced in~\cite{VTCodes} to correct errors on a Z-channel. It follows from (\ref{eq:VTCondition}) that these codes are nonlinear over the binary field.
\end{remark}

\noindent {\bf Encoding:} It is possible to construct $VT\left( n \right)$ codes for any $n$, but to maximize the speed of encoding and decoding we choose $n = 2^\ell$ and $k = n-\ell-1$, because when $n+1 = 2^\ell + 1$ the necessary modular arithmetic can be executed more efficiently (see~\cite{piestrak1994design}). The encoding algorithm is described below.
\vspace{-2mm}
\begin{framed}
\begin{enumerate}[leftmargin=0.3cm]
\item Start with a zero-vector $\mathbf{c}$ of length $n = 2^\ell$.

\item Set positions that are not powers of two to data bits (there are $k = n-\ell-1$ such positions).

\item Set $s$ to be the minimum value that needs to be added to the checksum $\sum_{i=1}^n{ic_i}$ to make it equal to $0$ modulo $n+1$.

\item Set the $\ell+1$ positions that are powers of two to the binary expansion (of length $\ell+1$) of $s$. Start from $c_1$ and set it to the least significant bit of the binary expansion of $s$, and move all the way to $c_n$ and set it to the most significant bit of the binary expansion of $s$.
\end{enumerate}
\end{framed}
\vspace{-2mm}
It is straightforward to verify that the resulting codeword satisfies (\ref{eq:VTCondition}). We know that in step $3$ we have $0 \leq s \leq n$. Since $n = 2^\ell$, the binary expansion of $s$ is at most $\ell+1$ bits long. 

After encoding the $k$ data bits into a codeword $\mathbf{c}$ of size $n$, we append six delimiter bits to create the \textit{extended codeword} of length $m = n + 6$ that we write to the memory. The six delimiter bits are fixed and equal to $\mathbf{110000}$.

In the remainder of this section, we show that the extended codeword introduced above can correct a single shift error and detect up to two shift errors every $m+3$ shifts. Then, we show how we can correct two shift errors. We present our argument in five steps. In steps~1 and~2, we show that VT codes are capable of correcting a single shift error. Step~3 shows how we can detect up to two shift errors and takes into account the fact that the memory controller is always going to provide the desired number of bits, regardless of whether an error occurs. Step~4 show how to correct a single shift error using our extended codeword. Finally, step~5 incorporates spatial diversity to correct two shift errors.

\vspace{1mm}
\noindent{{\bf Step~1: VT codes are single-deletion-error correcting codes.}} Consider two binary strings $\mathbf{x} = x_1,x_2,\ldots,x_{n}$ and $\mathbf{y} = y_1,y_2,\ldots,y_{n}$. Let $D_{-1}\left( \mathbf{x} \right)$ and $D_{-1}\left( \mathbf{y} \right)$ denote the set of all the binary strings of length $n-1$ that result from a single deletion in $\mathbf{x}$ and $\mathbf{y}$ respectively. If we wish to use these two strings to store different values in a single-deletion channel, we need
\begin{align}
\label{Eq:CodeRequirement}
D_{-1}\left( \mathbf{x} \right) \cap D_{-1}\left( \mathbf{y} \right) = \emptyset.
\end{align}

The following lemma shows that VT codes are single-deletion-error correcting codes.

\begin{lemma}[\cite{levenshtein1966binary,levenshtein1965binary}]
\label{Lemma:VTGraphBased}
Any two binary strings $\mathbf{x}, \mathbf{y} \in VT\left( n \right)$, satisfy (\ref{Eq:CodeRequirement}).
\end{lemma}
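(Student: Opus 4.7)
The plan is to argue by contradiction. Suppose $\mathbf{x},\mathbf{y}\in VT(n)$ are distinct and share a common child $\mathbf{z}$ of length $n-1$, so that $\mathbf{x}$ is obtained from $\mathbf{z}$ by inserting a single bit, and similarly for $\mathbf{y}$. The strategy is to track how the VT checksum changes under a deletion, show that this change uniquely identifies the parent among codewords, and deduce $\mathbf{x}=\mathbf{y}$.

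First I would derive the deletion identity: if $\mathbf{z}$ is obtained from $\mathbf{x}=(x_1,\dots,x_n)$ by deleting position $i$ (with bit $b=x_i$), then a direct manipulation of the sum gives
\[
\sum_{j=1}^{n-1} j\,z_j \;=\; \sum_{j=1}^{n} j\,x_j \;-\; i\,b \;-\; W_{>i}(\mathbf{x}),
\]
where $W_{>i}(\mathbf{x})=\sum_{j>i}x_j$ denotes the number of $1$'s of $\mathbf{x}$ strictly to the right of position $i$. Combined with $\mathbf{x}\in VT(n)$, this yields
\[
\sum_{j=1}^{n-1} j\,z_j \;\equiv\; -\bigl(i\,b + W_{>i}(\mathbf{x})\bigr) \pmod{n+1}.
\]
The same identity, with indices $i'$ and bit $b'$, applies to $\mathbf{y}$. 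Consequently the assumption $\mathbf{x},\mathbf{y}\in VT(n)$ forces
\[
i\,b + W_{>i}(\mathbf{x}) \;\equiv\; i'\,b' + W_{>i'}(\mathbf{y}) \pmod{n+1}.
\]

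Next I would resolve the ambiguity in the choice of deletion position inside a run of equal bits by adopting the standard canonical convention: always delete the leftmost bit of any run. Under this convention, the pair $(i,b)$ is a well-defined function of $\mathbf{x}$ and its relationship to $\mathbf{z}$, and one checks the basic bound $0 \le i\,b + W_{>i}(\mathbf{x}) \le n$; indeed $W_{>i}(\mathbf{x})\le n-i$ and $b\le 1$, so the sum is at most $n$, and it is trivially nonnegative. Hence two residues in the display above that agree modulo $n+1$ must in fact be equal as integers in $\{0,1,\dots,n\}$.

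The final step is to show that equality of this integer invariant, together with both parents having $\mathbf{z}$ as a child, forces the canonical insertion positions to coincide and the inserted bits to be equal, hence $\mathbf{x}=\mathbf{y}$. To do this I would assume $i\le i'$ and analyze the bits of $\mathbf{x},\mathbf{y}$ between the two canonical deletion sites: because both strings delete to $\mathbf{z}$, the segment between the two sites must be a single run of a constant value, which pins down $b,b'$ and the increment $W_{>i}(\mathbf{x}) - W_{>i'}(\mathbf{y})$ in terms of $i'-i$. Plugging into the equality of invariants yields a contradiction unless $i=i'$, and then $b=b'$ follows, giving $\mathbf{x}=\mathbf{y}$.

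The main obstacle is precisely this last case analysis: the interaction between the shift in deletion position and the change in suffix weight has to be controlled under the canonical-run convention. The crucial observation that makes it tractable is the tight bound $0\le i\,b + W_{>i}(\mathbf{x})\le n$, which collapses the modular equality into an integer one and leaves no room for two genuinely different canonical insertions to coexist inside a single VT equivalence class.
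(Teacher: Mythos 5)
Your first two steps match the paper's argument (Appendix~A, following the Levenshtein/Sloane decoder): the checksum deficit caused by deleting bit $b$ from position $i$ is exactly $i\,b + W_{>i}(\mathbf{x})$, and since this integer lies in $\{0,1,\dots,n\}$ the congruence modulo $n+1$ upgrades to an equality of integers. The gap is in your final step. The claim that ``because both strings delete to $\mathbf{z}$, the segment between the two sites must be a single run of a constant value'' is not a consequence of the common-child hypothesis --- it is \emph{equivalent} to the conclusion $\mathbf{x}=\mathbf{y}$ you are trying to reach. Two insertions into the same string $\mathbf{z}$, of bit $b$ at position $i$ and bit $b'$ at position $i'>i$, produce the same parent precisely when $b=z_i=\cdots=z_{i'-1}=b'$; for genuinely distinct parents of a common child this segment need not be constant (e.g.\ $\mathbf{x}=0101$ and $\mathbf{y}=1010$ both delete to $101$, with canonical deletion positions $1$ and $4$). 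So as written the last step assumes what it must prove, and the leftmost-of-run convention does not repair this.

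What closes the argument --- and what the paper actually does --- is to compare the deficit $d=i\,b+W_{>i}(\mathbf{x})$ with the weight $\omega$ of the common child $\mathbf{z}$. If $b=0$ then $d=W_{>i}(\mathbf{x})\le\omega$, whereas if $b=1$ then $d=i+W_{>i}(\mathbf{x})=1+L_0+\omega>\omega$, where $L_0$ is the number of $0$'s to the left of the deleted bit. Hence whether $d\le\omega$ or $d>\omega$ determines the deleted bit, so $b=b'$; and then $d$ itself (respectively $d-\omega-1$) equals the number of $1$'s to the right (respectively the number of $0$'s to the left) of the deletion point, which fixes the reinsertion location in $\mathbf{z}$ up to sliding within a run of equal bits --- and all positions within that run yield the same string. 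Therefore $\mathbf{x}=\mathbf{y}$. If you replace your final paragraph with this threshold comparison, your proof is complete and coincides with the paper's.
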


For completeness, we present the proof of Lemma~\ref{Lemma:VTGraphBased} in Appendix~\ref{Appendix:AlgoDescription}.

\vspace{1mm}
\noindent{{\bf Step~2:  VT codes are single-insertion-error correcting codes.}} Note that a single-insertion-error correcting code is a single-repetition-error correcting code as well. Similar to what we described above, let $D_{+1}\left( \mathbf{x} \right)$ denote the set of all the binary strings of length $n+1$ that result from a single insertion in $\mathbf{x}$. We have the following result that shows the Varshamov-Tenengolts codes can be used for single-insertion-error-correction as well as single-deletion-error-correction.
\begin{lemma}
\label{Lemma:DeletionandInsertion}
If for two binary strings of length $n$ condition (\ref{Eq:CodeRequirement}) is satisfied, then we have
\begin{align}
\label{Eq:InsertionConstraint}
D_{+1}\left( \mathbf{x} \right) \cap D_{+1}\left( \mathbf{y} \right) = \emptyset.
\end{align}
In other words, a single-deletion-error correcting code is a single-insertion-error correcting code.
\end{lemma}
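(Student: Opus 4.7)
The plan is to prove the contrapositive: if $D_{+1}(\mathbf{x}) \cap D_{+1}(\mathbf{y}) \neq \emptyset$, I will exhibit an element of $D_{-1}(\mathbf{x}) \cap D_{-1}(\mathbf{y})$, contradicting (\ref{Eq:CodeRequirement}). The case $\mathbf{x} = \mathbf{y}$ is trivial (both intersections are nonempty whenever $n \geq 1$), so I assume $\mathbf{x} \neq \mathbf{y}$.

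Next, pick any $\mathbf{z} = z_1 z_2 \cdots z_{n+1} \in D_{+1}(\mathbf{x}) \cap D_{+1}(\mathbf{y})$. Because $\mathbf{z}$ has length $n+1$, there exist positions $a, b \in \{1, \ldots, n+1\}$ such that deleting $z_a$ from $\mathbf{z}$ yields $\mathbf{x}$ and deleting $z_b$ from $\mathbf{z}$ yields $\mathbf{y}$. Since $\mathbf{x} \neq \mathbf{y}$, necessarily $a \neq b$; without loss of generality take $a < b$.

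The key step is to form the length-$(n-1)$ string $\mathbf{u}$ obtained by deleting both $z_a$ and $z_b$ from $\mathbf{z}$, and then verify that $\mathbf{u} \in D_{-1}(\mathbf{x}) \cap D_{-1}(\mathbf{y})$. This is essentially the fact that consecutive single deletions commute: reading off indices, $\mathbf{u}$ is obtained from $\mathbf{x}$ by deleting its $(b-1)$-st coordinate (the shifted image of $z_b$), and is obtained from $\mathbf{y}$ by deleting its $a$-th coordinate (the position of $z_a$, which is undisturbed because $a < b$). Both statements are purely index-chasing and should be written out carefully, but once done they immediately produce the required contradiction with the hypothesis that $D_{-1}(\mathbf{x}) \cap D_{-1}(\mathbf{y}) = \emptyset$.

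I expect the main (though still minor) obstacle to be presenting the index bookkeeping cleanly — making it unambiguous which positions in $\mathbf{x}$ and $\mathbf{y}$ correspond to $z_a$ and $z_b$ under the shift caused by the first deletion. Conceptually, however, the result reduces to the symmetry between insertion and deletion, namely $\mathbf{v} \in D_{+1}(\mathbf{w}) \Leftrightarrow \mathbf{w} \in D_{-1}(\mathbf{v})$, so no deeper property of binary strings is needed. In particular, I would not use any special structure of $\mathbf{x}, \mathbf{y}$; the lemma holds for arbitrary binary strings of length $n$, which is exactly why it upgrades Lemma~\ref{Lemma:VTGraphBased} from single-deletion correction to single-insertion (and hence single-repetition) correction for free.
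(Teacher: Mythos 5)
Your proof is correct and follows essentially the same route as the paper's: both argue by contraposition that a common length-$(n+1)$ supersequence $\mathbf{z}$ of $\mathbf{x}$ and $\mathbf{y}$ yields, after deleting both inserted positions, a common length-$(n-1)$ subsequence, contradicting (\ref{Eq:CodeRequirement}). The only difference is cosmetic — you index the two positions inside $\mathbf{z}$ while the paper indexes the insertion positions in $\mathbf{x}$ and $\mathbf{y}$ — and your index bookkeeping (delete $x_{b-1}$ and $y_a$ for $a<b$) matches the paper's.
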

\begin{proof}
Suppose two binary strings $\mathbf{x}$ and $\mathbf{y}$ of length $n$ satisfy (\ref{Eq:CodeRequirement}) but not (\ref{Eq:InsertionConstraint}). Suppose inserting $x^\prime$ in position $i$ of $\mathbf{x}$ and $y^\prime$ in position $j$ of $\mathbf{y}$ results in the same sequence of length $n+1$. Without loss of generality, we assume $i \leq j$. This implies that deleting $x_{j-1}$ and $y_{i}$ from $\mathbf{x}$ and $\mathbf{y}$ respectively, would result in the same sequence of length $n-1$. This contradicts (\ref{Eq:CodeRequirement}), thus proving the lemma.
\end{proof}

\vspace{1mm}
\noindent{{\bf Step~3: Our extended codeword can detect up to two shift errors.}} In this step, we show that using the six delimiter bits and the checksum condition of (\ref{eq:VTCondition}), we can detect up to two shift errors. Consider $m$ consecutive magnetic domains in racetrack memory. If no error happens, we need $m-1$ shifts to access and read the $m$ bits stored in these $m$ domains. We summarize the resulting observations in positions $m-5,m-4,\ldots,m$ for all possible error combinations in Table~\ref{Table:Syndrom}. In Appendix~\ref{Appendix:Table}, we explain how Table~\ref{Table:Syndrom} is obtained.

\begin{figure}[t]
\centering
\includegraphics[height = 6cm]{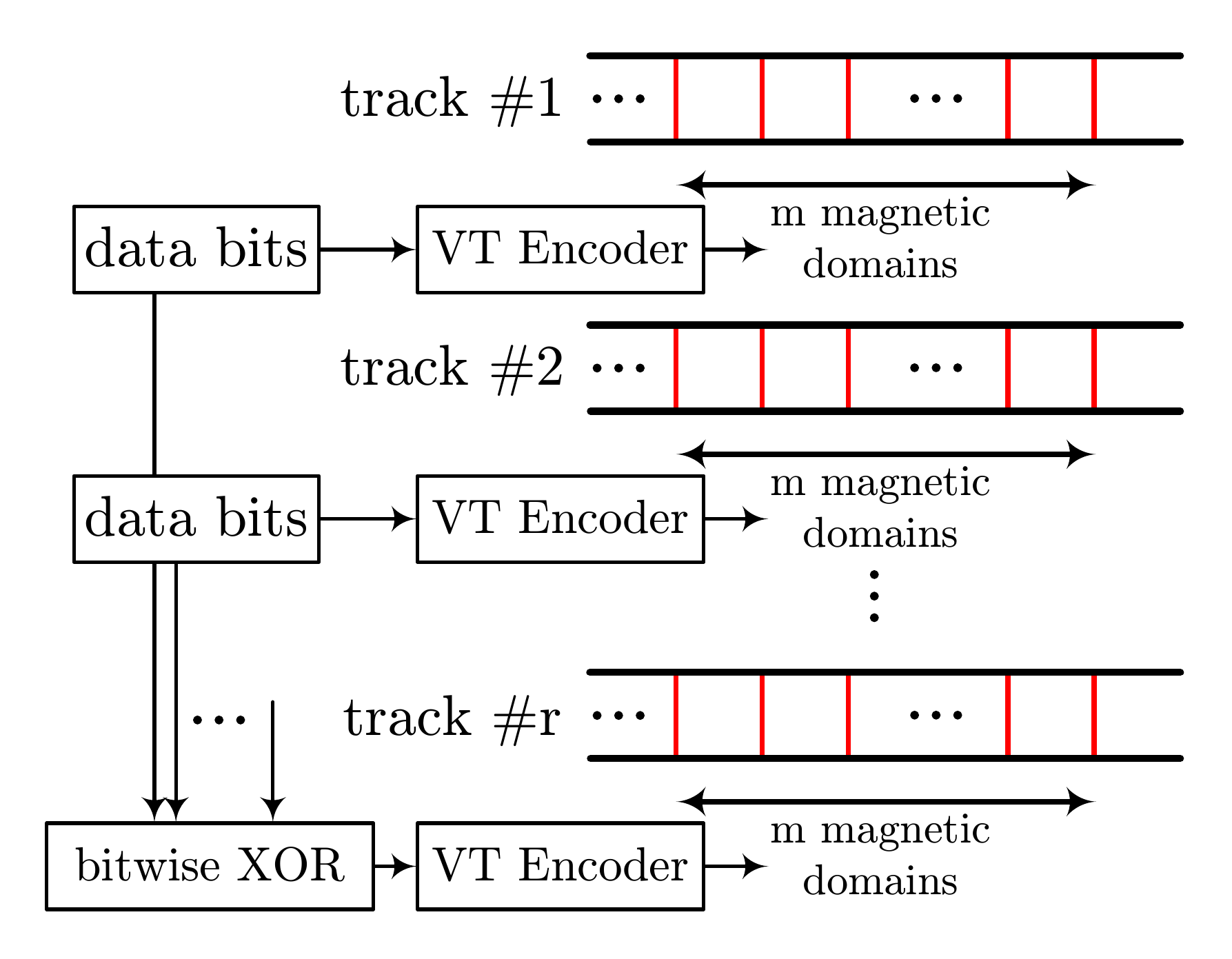}
\caption{Using an outer code on top of the code presented in Section~\ref{Section:Practical} allows us to correct two shift errors.\label{Fig:OuterCode}}
\vspace{-4mm}
\end{figure}

The most interesting case is when a combination of a deletion error and a repetition error takes place prior to the delimiter bits. In this case, we observe the delimiter bits as the error-free case, \emph{i.e.} $110000$. To detect such errors, we have the following result for VT codes.
\begin{lemma}
\label{Lemma:combination}
For any two binary strings  $\mathbf{x}, \mathbf{y} \in VT\left( n \right)$, $\mathbf{y}$ cannot be obtained from $\mathbf{x}$ by a deletion followed by an insertion (or by an insertion followed by a deletion).
\end{lemma}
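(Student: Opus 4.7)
The plan is to reduce both statements to the previously established facts that codewords in $VT(n)$ have disjoint single-deletion balls (Lemma~\ref{Lemma:VTGraphBased}) and, by extension, disjoint single-insertion balls (Lemma~\ref{Lemma:DeletionandInsertion}). The key observation is that any two-step edit of the form ``delete then insert'' naturally exposes an intermediate string of length $n-1$ that lies in the deletion ball of both $\mathbf{x}$ and $\mathbf{y}$, while any ``insert then delete'' edit exposes an intermediate string of length $n+1$ that lies in the insertion ball of both. So each of the two claims collapses into a one-line contradiction against the appropriate earlier lemma.

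For the first case, I would assume for contradiction that $\mathbf{x} \neq \mathbf{y}$ are both in $VT(n)$ and that $\mathbf{y}$ is obtained from $\mathbf{x}$ by deleting some bit (producing an intermediate string $\mathbf{z}$ of length $n-1$) and then inserting some bit into $\mathbf{z}$ to recover a string of length $n$. By construction $\mathbf{z} \in D_{-1}(\mathbf{x})$. The insertion step means $\mathbf{y}$ differs from $\mathbf{z}$ by exactly one inserted symbol, so removing that same inserted symbol from $\mathbf{y}$ gives back $\mathbf{z}$, i.e.\ $\mathbf{z} \in D_{-1}(\mathbf{y})$. Hence $\mathbf{z} \in D_{-1}(\mathbf{x}) \cap D_{-1}(\mathbf{y})$, contradicting Lemma~\ref{Lemma:VTGraphBased}.

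For the second case, I would run the same argument one level up. Assume $\mathbf{y}$ is obtained from $\mathbf{x}$ by inserting a bit (producing $\mathbf{w}$ of length $n+1$) and then deleting a bit from $\mathbf{w}$ to return to length $n$. Then $\mathbf{w} \in D_{+1}(\mathbf{x})$ by the insertion step, and $\mathbf{w} \in D_{+1}(\mathbf{y})$ because $\mathbf{y}$ was obtained by deleting one bit from $\mathbf{w}$, so re-inserting that same bit yields $\mathbf{w}$ back. Thus $D_{+1}(\mathbf{x}) \cap D_{+1}(\mathbf{y}) \neq \emptyset$, contradicting Lemma~\ref{Lemma:DeletionandInsertion} applied to the VT property.

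I do not expect any real obstacle: the proof is a pure ``containment in both balls'' argument, with the only subtlety being to make sure the right-to-left direction (that a bit inserted into $\mathbf{z}$ to get $\mathbf{y}$ can be deleted from $\mathbf{y}$ to recover $\mathbf{z}$, and similarly in the dual case) is stated explicitly, since this is precisely the hinge that turns a one-way edit into symmetric membership in a deletion or insertion ball. Once that symmetry is spelled out, Lemmas~\ref{Lemma:VTGraphBased} and \ref{Lemma:DeletionandInsertion} finish each case immediately.
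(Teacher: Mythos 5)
Your proof is correct and follows essentially the same route as the paper: both arguments reduce the two-step edit to a nonempty intersection of single-deletion (or single-insertion) balls of $\mathbf{x}$ and $\mathbf{y}$ and then invoke Lemmas~\ref{Lemma:VTGraphBased} and~\ref{Lemma:DeletionandInsertion}. If anything, your version is slightly cleaner, since in the delete-then-insert case you exhibit the length-$(n-1)$ intermediate string directly as an element of $D_{-1}\left( \mathbf{x} \right) \cap D_{-1}\left( \mathbf{y} \right)$, whereas the paper detours through the insertion balls of deleted substrings before reaching the same contradiction.
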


\begin{proof}
Suppose $\mathbf{x}^\prime \in D_{-1}\left( \mathbf{x} \right)$ and $\mathbf{y}^\prime \in D_{-1}\left( \mathbf{y} \right)$. If $\mathbf{y}$ can be obtained from $\mathbf{x}$ by a deletion followed by an insertion, then 
\begin{align}
D_{+1}\left( \mathbf{x}^\prime \right) \cap D_{+1}\left( \mathbf{y}^\prime \right) \neq \emptyset.
\end{align}
In other words, $\mathbf{x}^\prime$ and $\mathbf{x}^\prime$ do not satisfy (\ref{Eq:InsertionConstraint}) which in turn implies that $\mathbf{x}$ and $\mathbf{y}$ do not satisfy (\ref{Eq:CodeRequirement}). However, this contradicts Lemma~\ref{Lemma:VTGraphBased}. This contradiction proves the result. A similar argument can be used when an insertion is followed by a deletion. 
\end{proof}
Lemma~\ref{Lemma:combination} immediately results in the following observation.
\begin{corollary}
A combination of a deletion error and an insertion error on $\mathbf{x} \in VT\left( n \right)$ either results in $\mathbf{x}$ or in $\mathbf{y} \notin VT\left( n \right)$ (an invalid codeword).
\end{corollary}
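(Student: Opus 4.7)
The plan is to deduce this corollary directly from Lemma~\ref{Lemma:combination} by a simple contrapositive/case-split argument, rather than redoing the combinatorial work on $D_{-1}$ and $D_{+1}$.

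First I would fix $\mathbf{x} \in VT(n)$ and let $\mathbf{z}$ denote the length-$n$ string that results after applying a deletion followed by an insertion (or an insertion followed by a deletion) to $\mathbf{x}$. There are only two possibilities: either $\mathbf{z}=\mathbf{x}$, or $\mathbf{z}\neq \mathbf{x}$. The first case is already one of the conclusions of the corollary, so there is nothing to prove there. (A quick sanity remark: this case really can occur, e.g.\ deleting a bit and then re-inserting the same value in the same position.)

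Second, I would handle the nontrivial case $\mathbf{z}\neq\mathbf{x}$ by contradiction. Assume $\mathbf{z}\in VT(n)$. Then $\mathbf{x}$ and $\mathbf{z}$ are two \emph{distinct} elements of $VT(n)$, and by construction $\mathbf{z}$ is obtained from $\mathbf{x}$ by a deletion followed by an insertion (or an insertion followed by a deletion). This is exactly the situation that Lemma~\ref{Lemma:combination} forbids, yielding the desired contradiction. Hence $\mathbf{z}\notin VT(n)$, which matches the second alternative in the corollary's statement (taking $\mathbf{y}:=\mathbf{z}$).

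I do not expect any real obstacle here: the only subtle point is making sure that both orderings of the two error events (delete-then-insert and insert-then-delete) are covered, but Lemma~\ref{Lemma:combination} already addresses both, so the argument goes through uniformly. Thus the corollary is essentially a contrapositive packaging of Lemma~\ref{Lemma:combination}, phrased to emphasize what the decoder can infer when it sees the delimiter bits $\mathbf{110000}$ in the expected positions but observes a VT-syndrome mismatch: the only two options are a genuine no-error read or an invalid codeword, and the latter is detectable via condition~(\ref{eq:VTCondition}).
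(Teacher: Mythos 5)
Your proof is correct and takes essentially the same route as the paper, which treats the corollary as an immediate consequence of Lemma~\ref{Lemma:combination}: split on whether the resulting string equals $\mathbf{x}$, and in the distinct case invoke the lemma to rule out membership in $VT(n)$. Your explicit handling of the degenerate case $\mathbf{z}=\mathbf{x}$ (which the lemma's statement implicitly excludes) is a welcome bit of extra care.
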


The shaded row in Table~\ref{Table:Syndrom} corresponds to a combination of a deletion error and an insertion error that resulted in an invalid codeword.

\noindent{{\bf Step~4: Our extended codeword can correct a single shift error and decode the data.}} First, we need to understand how to recover data in the error-free case.

\noindent {\bf Error-free decoding:} If no error is detected using the delimiter bits and the checksum, the $k = n-\ell-1$ data bits are simply retrieved from the positions in $\mathbf{c}$ that are not powers of two, \emph{i.e.} positions $3,5,6,7,9,$ etc. So the only remaining task is to correct a shift error when it happens. 

\vspace{1mm}
\noindent {\bf Correcting a deletion error:} Deletion errors can be corrected using the elegant algorithm proposed by Levenshtein~\cite{levenshtein1966binary,levenshtein1965binary}. This algorithm is presented below, and we refer the reader to~\cite{Sloane} for a mathematical analysis.
\begin{framed}
\begin{enumerate}[leftmargin=0.3cm]
\item Suppose a codeword $\mathbf{c} = \left( c_1, c_2, \ldots, c_n \right)$ from $VT\left( n \right)$ is stored in racetrack memory, and we detect a single deletion, and we observe $\mathbf{c}^\prime =  \left( c^\prime_1, c^\prime_2, \ldots, c^\prime_{n-1} \right)$.

\item Set $\omega$ to be the Hamming weight (number of $1$'s) of $\mathbf{c}^\prime$.

\item Calculate the checksum $\sum_{i=1}^{n-1}ic^\prime_i$ and set $s$ to be the minimum amount that needs to be added to the checksum in order to make it $0~\textbf{mod}~n+1$.

\item If $s \leq \omega$, we restore the codeword by adding a $0$ immediately to the left of the rightmost $s$ $1$'s. Otherwise, we restore the codeword by adding a $1$ immediately to the right of the leftmost $\left( s-\omega-1 \right)$ $0$'s. 
\end{enumerate}
\end{framed}

Using this algorithm, we reconstruct the correct codeword and recover the dataword as discussed for the error free case. Note that while the decoding algorithm corrects the codeword, it does not reveal where the deletion occurred. 

\vspace{1mm}
\noindent {\bf Correcting an insertion error:} From Lemma~\ref{Lemma:DeletionandInsertion}, we know that VT codes can correct an insertion. The corresponding Insertion Correction Algorithm is a modified version of the Deletion Correction Algorithm presented above and is omitted here due to space limitations.
%
%
%

\vspace{1mm}
\noindent{{\bf Step~5: The outer code corrects two shift errors.}} Suppose we detect two shift errors and the data is erased as in Table~\ref{Table:Syndrom}. We can use an outer code to recover data in this case as illustrated in Fig.~\ref{Fig:OuterCode} where the outer code is a simple single parity code. We note that per track we can correct single shift errors and detect up to two shift errors.

%



\appendices

\section{Proof of Lemma~\ref{Lemma:VTGraphBased}}
\label{Appendix:AlgoDescription}

In this appendix, we prove Lemma~\ref{Lemma:VTGraphBased}. To prove the result, it suffices to show that after a single deletion, Varshamov-Tenengolts codes cannot be confused and unique recovery is guaranteed. We borrow from the argument given in~\cite{Sloane}. 

Suppose a codeword $\mathbf{c} = \left( c_1, c_2, \ldots, c_n \right)$ from $VT\left( n \right)$ is stored in Racetrack memory; and bit $x \in \{ 0, 1 \}$ in position $1 \leq p \leq n$ is deleted to observe $\mathbf{c}^\prime = \left( c^\prime_1, c^\prime_2, \ldots, c^\prime_{n-1} \right)$. Let there be $L_0$ $0$'s and $L_1$ $1$'s to the left of $x$, and $R_0$ $0$'s and $R_1$ $1$'s to the right of $x$ (with $p = 1 + L_0 + L_1$). Also, let $\omega$ denote the weight (number of $1$'s) of $\mathbf{c}^\prime$. Calculate the new checksum $\sum_{i=1}^{n-1}ic^\prime_i$.

Now if $x = 0$, the new checksum is $R_1$ less than $\sum_{i=1}^{n}ic_i$. We note that $R_1 \leq \omega$. If $x = 1$, the new checksum is $p+R_1$ less than $\sum_{i=1}^{n}ic_i$. We note that $p+R_1 = 1 + L_0 + L_1 + R_1 = 1 + \omega + L_0$ which is greater than $\omega$.

So if the deficiency in the checksum is less than or equal to $\omega$, we restore the codeword by adding a $0$ just to the left of the rightmost $R_1$ $1$'s. Otherwise, we restore the codeword by adding a $1$ just to the right of the leftmost $L_0$ $0$'s.

\section{Derivation of Table~\ref{Table:Syndrom}}
\label{Appendix:Table}

In this appendix we describe how the combination of a VT code and six delimiter bits can be used to to identify two deletions, or a deletion and a repetition. The other cases appearing in Table~\ref{Table:Syndrom} can be derived using similar arguments.

\noindent {\bf Two deletions:} There are three cases to consider as listed below.
\begin{enumerate}

\item {\bf Two deletions in the VT code:} In  this case, in positions $m-5, m-4, m-3,$ and  $m-2$ we observe $0000$ and position $m-1$ corresponds to the first bit of the next extended codeword. In this case, we declare erasure and we use the outer-code to resolve this case.

\item {\bf Two deletions in the delimiter bits:} For this scenario, the VT code is unaffected. However, we will declare either one or two deletions as we cannot identify the location of these errors. In positions $m-5, m-4, m-3,$ and  $m-2$ we can observe three possibilities as described below and position $m-1$ corresponds to the first bit of the next extended codeword.
\begin{enumerate}

\item $0000$: Similar to the case when two deletions happen in the VT code, we declare an erasure. The VT code is not affected but since we cannot determine that the errors were within the delimiter bits, we cannot trust the checksum. We declare erasure and we use the outer-code to resolve this case.

\item $1000$: According to Table~\ref{Table:Syndrom}, we declare a single deletion. Again note that the VT code is not affected here. So even if we assume a single deletion, data can be reliably recovered. The second deletion error will be detected when we reach the next set of delimiter bits are obtained. We note that our assumption of at most two shift errors every $m+3$ shifts is needed to guarantee that the second shift error will be detected.

\item $1100$: According to Table~\ref{Table:Syndrom}, this is an error free case. Since the VT code is not affected here, data can be reliably recovered. However, if there was no error at all, positions $m-1$ and $m$ would correspond to the last two of the six delimiter bits. On the other hand, here positions $m-1$ and $m$ correspond to the first two bits of the next extended codeword and we deal with that when the next set of delimiter bits are obtained.

\end{enumerate} 

\item {\bf One deletion in the VT code and one deletion in the delimiter bits:} For this case, in positions $m-5, m-4, m-3,$ and  $m-2$ we can observe two possibilities as described below and position $m-1$ corresponds to the first bit of the next extended codeword.
\begin{enumerate}

\item $0000$: Similar to the case when two deletions happen in the VT code, we declare an erasure. The VT code only contains a single deletion but since we cannot determine this fact, we cannot trust the checksum. We use the outer-code to resolve this case.

\item $1000$: According to Table~\ref{Table:Syndrom}, we declare a single deletion. Since the VT code only contains a single shift error, data can be reliably recovered. 

\end{enumerate}

\end{enumerate}

\noindent {\bf One deletion and one repetition:} We discussed the case when one deletion and one repetition happen within the VT code in Step~3 of Section~\ref{Section:Practical}. Suppose one deletion happens in the VT code and one repetition in the delimiter bits. For this case, in positions $m-5, m-4, m-3,$ and  $m-2$ we can observe two possibilities as described below. The case in which one repetition happens in the VT code and one deletion in the delimiter bits can be explained similarly.
\begin{enumerate}

\item $1000$: According to Table~\ref{Table:Syndrom}, we declare a single deletion. Since the VT code only contains a single shift error, data can be reliably recovered. 

\item $1100$: If the checksum is zero modulo $n+1$, we were lucky and the deletion followed by introducing the first bit of the delimiter bits in position $m-6$ resulted in the correct codeword. If the checksum is not zero modulo $n+1$, we declare erasure and we use the outer-code to resolve this case.
\end{enumerate}

\bibliographystyle{ieeetr}
\bibliography{bib_Racetrack}

\end{document}